\renewcommand{\algorithmicrequire}{\textbf{Input:}}
\renewcommand{\algorithmicensure}{\textbf{Output:}}
\newcommand{\RN}{{\rm I\!R}}
\newtheorem{prop}{Proposition}
\newtheorem{rem}{Remark}
\newtheorem{ass}{Assumption}
\newtheorem{prob}{Problem}
\title{\LARGE \bf
A Set-Based Approach for Robust Control Co-Design
}
\author{Trevor J. Bird$^{1}$, Jacob A. Siefert$^{2}$, Herschel C. Pangborn$^{2}$, and Neera Jain$^{1}$ 
\thanks{*This work is supported by the U.S. Office of Naval Research Thermal Science and Engineering Program under contract number N00014-21-1-2352.}
\thanks{$^{1}$Trevor J. Bird and Neera Jain are with the School of Mechanical Engineering, Purdue University, West Lafayette, IN 47907 USA, {\tt\small bird6@purdue.edu}, {\tt\small neerajain@purdue.edu}.}%
\thanks{$^{2}$Jacob A. Siefert and Herschel C. Pangborn are with the Department of Mechanical Engineering, The Pennsylvania State University, University Park, PA 16802 USA, {\tt\small jas7031@psu.edu}, {\tt\small hcpangborn@psu.edu}.}%
}
\begin{document}

\maketitle
\thispagestyle{empty}
\pagestyle{empty}

\begin{abstract}

Control Co-Design (CCD) considers the coupled effects of both the plant and control parameters to optimize a system's closed-loop transient performance during the design stage. 
This paper presents a new method for CCD with guarantees on robustness to nondeterministic disturbances for all initial conditions within a specified region of operation. 
This is accomplished by calculating the reachable sets of a candidate closed-loop system directly within the optimization problem. 
Using this approach, the plant and control parameters are simultaneously chosen to shape these reachable sets to be robustly positive invariant and thus safe for all time. 
Compared to conventional approaches that perform the optimization for a single initial condition and an a priori chosen sequence of disturbances, the proposed set-based method avoids sensitivity to variations in the assumed design scenario. As a representative example, the proposed method is applied to an active suspension system. 

\end{abstract}



\section{Introduction}\label{sec-intro}

\emph{Motivation and Problem Definition:} Control Co-Design (CCD), also known as combined plant and control design, provides a promising tool for the design of closed-loop systems. By simultaneously choosing the coupled plant and control decision variables, the design space is broadened, resulting in systems with transient performance that may not be realized through a more conventional sequential design process \cite{garcia-sanz_control_2019}. The CCD problem is often formulated as the solution to an optimization program where the dynamics of the system are enforced through equality constraints and the objective function is evaluated based on the plant and control variables as well as the resulting trajectory of the system \cite{herber_nested_2018}. To formulate the optimization problem, the designer must supply an initial condition and sequence of exogenous signals a priori. This results in a system that is designed to perform as desired under the specific operating condition or scenario posed in the problem formulation. However, in practice there almost always exists uncertainty in the operating regime, exogenous signals, and modeled dynamics \cite{azad_control_2022}. In such cases, the CCD problem must be formulated to be robust to these uncertainties to ensure that the resulting closed-loop system will meet the performance objectives across the operating regime. 

\emph{Gaps in Literature:} While classical CCD approaches can be immediately adapted to evaluate multiple deterministic scenarios simultaneously \cite{allison_co-design_2014}, formulations that account for nondeterministic uncertainties result in a more numerically challenging optimization problem. 
Researchers have developed Robust Control Co-Design (RCCD) methods as a stochastic or min-max optimization problem to determine static plant variables with open-loop control signals \cite{cui_comparative_2019,azad_investigations_2023}. However, the resulting system is designed for an ideal control signal that may not be achieved online during operation. 
The reader is directed to \cite{azad_overview_2023} for a thorough discussion of the existing formulations for RCCD. 

For safety-critical systems, the RCCD problem must account for the worst-case scenario during the design stage. Existing approaches have solved the worst-case RCCD problem by employing well-studied robust control strategies within a nested optimization problem, including $\text{H}_\infty$ \cite{nash_combined_2020}, min-max MPC \cite{nash_robust_2021}, and tube-based MPC \cite{tsai_robust_2023}. While useful for guaranteeing the robustness of the RCCD system during online operation, these approaches suffer from the same drawbacks as the robust control strategies used in their design and operation, most notably 
conservatism in the system's safe operating region and large computational burden \cite{di_cairano_model_2012,yu_inherent_2014}. 

\emph{Contribution:} This paper presents a new method for RCCD that accounts for uncertainty using set-based reachability. Set-based methods for reachability analysis provide a powerful tool for determining the safety of systems with uncertainty in the dynamics, exogenous inputs, and initial conditions \cite{althoff_set_2021}.
By allowing the plant and control design parameters of the closed-loop system to change, and analyzing the reachable sets within an optimization problem, the proposed approach shapes the reachable sets containing all possible trajectories. We show how support functions can be used to translate complex reachable sets to scalar values for use in a cost function to favor candidate systems with all possible trajectories contracting quickly towards a reference value. Furthermore, we show how the same support functions can be used to build inequality constraints that ensure the system is robustly safe within a user-defined operating region, rather than the operating regime being evaluated after the design has been fixed. Using this robust set-based design approach, provably safe closed-loop systems can be designed using flexible control strategies that are not necessarily robust, and may therefore be more suitable for online operation. 
The only restrictions placed on the system and controller types that may be designed with the proposed method is that outer bounds on the closed-loop system's reachable set can be found. This includes systems with linear dynamics, nonlinear dynamics, hybrid dynamics, and optimal controllers \cite{althoff_set_2021,bird_set-based_2022}, 
thus providing a robust design tool suitable for many closed-loop systems. 

\emph{Outline:} In Sec. \ref{sec-prelims} we define notation and briefly discuss control co-design and set-based reachability analysis. In Sec. \ref{sec-RCCD} we present a new method for robust control co-design that optimizes the closed-loop system while considering all possible trajectories originating from an operating region. In Sec. \ref{sec-numEx} we provide an illustrative example followed by concluding remarks in Sec. \ref{sec-conclusions}. 
\section{Notation and Preliminaries}\label{sec-prelims}

The concatenation of two column vectors into a single column vector is denoted by $\left(x~y\right)=\left[x^T~y^T\right]^T$.
Sets are denoted by uppercase calligraphic letters, e.g., $\mathcal{Z}\subset{\rm I\!R}^{n}$. 
Given the sets $\mathcal{Z},\:\mathcal{W}\subset{\rm I\!R}^{n}$ and matrix $R\in\RN^{m\times n}$, the linear mapping of $\mathcal{Z}$ by $R$ is $R\mathcal{Z}=\{Rz~|~z\in\mathcal{Z}\}$ and
the Minkowski sum of $\mathcal{Z}$ and $\mathcal{W}$ is $\mathcal{Z}\oplus\mathcal{W}=\{z+w~|~z\in\mathcal{Z},\:w\in\mathcal{W}\}$. 
The support function of a set $\mathcal{Z}\subset{\rm I\!R}^{n}$ in a direction $l\in{\rm I\!R}^n$ is given by
\begin{equation}\label{eqn-supportFnZh}
  \rho_{\mathcal{Z}}(l)=\max\left\{l^Tz~\middle|~z\in\mathcal{Z}\right\}\:,
\end{equation}
and defines the supporting halfspace
\begin{equation}\label{eqn-supprtHalfZh}
    \mathcal{H}_{\mathcal{Z}}(l)=\left\{z\in{\rm I\!R}^{n}~\middle|~ l^Tz\leq\rho_{\mathcal{Z}}(l)\right\}\:,
\end{equation}
such that $\mathcal{Z}\subset\mathcal{H}_{\mathcal{Z}}(l)$ \cite{le_guernic_reachability_2010}. The template polyhedron of a set $\mathcal{Z}$ is defined as the intersection of halfspaces
\begin{equation}\label{eqn-templatePoly}
\lceil\mathcal{Z}\rceil_{\mathcal{L}}=\bigcap_{l\in\mathcal{L}}\mathcal{H}_{\mathcal{Z}}(l)\:,
\end{equation}
for a finite collection of directions $\mathcal{L}=\{l_1,\dots,l_k\}$. 
The template polyhedron of a set $\mathcal{Z}$ forms a convex over-approximation that is tight in the sense that the corresponding hyperplanes $\mathcal{H}_{\mathcal{Z}}(l)$ intersect the set $\mathcal{Z}$ for all $l\in\mathcal{L}$ and $\mathcal{Z}\subseteq\lceil\mathcal{Z}\rceil_{\mathcal{L}}$. Given an H-rep polytope 
\begin{equation}
    \mathcal{X}=\{x\in\RN^{n}~\vert~Hx\leq f\}\subset\RN^{n}\:,
\end{equation}
with $H\in\RN^{k \times n}$ and $f\in\RN^{k}$,
let $\mathcal{L}=\{h_1,\dots,h_k\}$ where $h_i\in\RN^{n}$ are the $k$ columns of $H^T$. Then it holds that $\lceil\mathcal{X}\rceil_{\mathcal{L}}=\mathcal{X}$ and $\mathcal{Z}\subseteq\mathcal{X}\iff\lceil\mathcal{Z}\rceil_{\mathcal{L}}\subseteq \mathcal{X}$. The containment of a set $\mathcal{Z}$ by the polyhedron $\mathcal{X}$ may be determined by evaluating the support functions of $\mathcal{Z}$ as
\begin{equation}
    \mathcal{Z}\subseteq\mathcal{X}\iff \rho_{\mathcal{Z}}(h_i)\leq f_i~\forall~h_i\in\mathcal{L}
\end{equation}
where $h_i$ and $f_i$ define the $k$ inequalities of $\mathcal{X}$ \cite{eaves_optimal_1982}. For ease of readability, given an H-rep polytope $\mathcal{X}$, the template polyhedron of a set $\mathcal{Z}$ in the normal directions of the faces of $\mathcal{X}$ is denoted by $\lceil\mathcal{Z}\rceil_{\mathcal{X}}$. 

\subsection{Plant and Control Co-Design}

Consider the closed-loop discrete-time dynamic system given by the difference equation
\begin{equation}\label{eqn-discreteDynamics}
    x(k+1)=f(x(k),\pi(x(k),p_c),v(k),p_p)\:,
\end{equation}
where $x(k)\in\RN^{n}$ is the system state and $v(k)\in\RN^{d}$ is a disturbance signal at time step $k\in\mathbb{Z}_{+}$. 
The plant design vector $p_p\in\RN^{p}$ captures parameters of the system dynamics \eqref{eqn-discreteDynamics} that may be decided by the designer.
The control signal $u(k)\in\RN^{m}$ at time step $k$ is given by the feedback control policy $u(k)=\pi(x(k),p_c):\RN^{n}\times\RN^{c}\mapsto\RN^{m}$ fully parameterized by the time-varying system state $x(k)$ and control design vector $p_c\in\RN^{c}$. For ease of readability, the difference equation \eqref{eqn-discreteDynamics} will be referenced as 
\begin{equation}\label{eqn-discreteDynamics-short}
    x(k+1)=f(x(k),v(k),p)\:,
\end{equation}
where $p=(p_p~p_c)\in\RN^{p+c}$ and the control policy is included within the dynamics $f(\cdot)$. 

Control co-design aims to simultaneously choose the plant design vector $p_p$ and control design vector $p_c$ to optimize the transient performance of the system \eqref{eqn-discreteDynamics}. This is conventionally accomplished by determining the co-designed system as the solution to the dynamic optimization problem
\begin{equation}\label{eqn-CCD-conventional}
\begin{split}
\min_{X,p}& \:\:m(N,p,x(N))+\sum_{k=0}^{N-1}l(k,p,x(k),v(k))\\
\text{s.t.}&\:\:x(k+1)=f(x(k),v(k),p)\:,\\
& \:\:g(k,x(k),v(k),p)\leq \mathbf{0}\:,\\
\text{for}&\:\:k=0,\dots,N\:,
\end{split}
\end{equation}
where $X=(x(0)~\cdots~x(N))$ is the trajectory of the system sampled over the finite-time horizon $k\in\{0,\dots,N\}$, $m(N,p,x(N))$ is the terminal cost and $l(k,p,x(k),v(k))$ is the running cost of the design. The inequality constraints $g(k,x(k),v(k),p)\leq \mathbf{0}$ are used to ensure that performance properties are met and that the design parameters $p$ are physically realizable. In this conventional formulation, the initial condition $x(0)$ and sequence of disturbances $V=(v(0)~\cdots~v(N-1))$ are constants chosen by the designer. 

The current state-of-the-art in the area of robust control co-design extends \eqref{eqn-CCD-conventional} to consider the worst-case realization of disturbances belonging to a bounded set $\mathcal{V}\subset\RN^{d}$ such that $v(k)\in\mathcal{V}~\forall~k\in\{0,\dots,N-1\}$. The robust formulation of the CCD problem is given by
\begin{equation}\label{eqn-CCD-conventional-robust}
\begin{split}
\min_{X,p,o}& \:\:o\\
\text{s.t.}&\:\:m(N,p,x(N))+\sum_{k=0}^{N-1}l(k,p,x(k),v(k))\leq o\:,\\
&\:\:x(k+1)=f(x(k),v(k),p)\:,\\
&\:\:g(k,x(k),v(k),p)\leq \mathbf{0}\:,\\
\text{for}&\:\:k=0,\dots,N,\:\:\text{and}\:\:\forall\:v(k)\in\mathcal{V}\:.
\end{split}
\end{equation}%
While the additional condition that the above must hold for all disturbances contained within the set $\mathcal{V}$ appears relatively benign, this makes the optimization program inherently a saddlepoint problem. 
The robust formulation of the CCD problem may be solved as a minimax, or bilevel program, where the lower level maximizes over the cost and constraint functions to find the worst-case possible sequence of disturbances to be considered in the design. 

\subsection{Reachable and Invariant Sets}

Reachability analysis consists of computing the set of states reachable by a dynamic system from a specified set for all admissible control inputs and disturbances \cite{althoff_set_2021}. Considering the dynamic system \eqref{eqn-discreteDynamics} with a set of states $\mathcal{R}(k)\subset\RN^{n}$ at time step $k$, the set of states reachable by the dynamic system \eqref{eqn-discreteDynamics} in one discrete time step for all possible disturbances within the set $\mathcal{V}\subset\RN^{d}$ is given by the forward reachable set as determined by the successor operator
\begin{equation}
    \texttt{Suc}(\mathcal{R}(k),\mathcal{V},p)=\left\{ f(x,v,p)~\middle\vert~x\in\mathcal{R}(k),\:v\in\mathcal{V}\right\}\:.
\end{equation}
Note that this expression differs from conventional definitions of the successor operator as it is now parametric with respect to the decision variables $p$.
The set of states reachable in time steps $k=0,\dots,N$ may be determined by successive application of the operator as $\mathcal{R}(k+1)=\texttt{Suc}\left(\mathcal{R}(k),\mathcal{V},p\right)$ from a specified initial set $\mathcal{R}(0)$. 

Given a safe subset of the state space $\overline{\mathcal{X}}\subset\RN^n$,
the set $\mathcal{O}\subseteq\overline{\mathcal{X}}$ is a robustly positive invariant set of the closed-loop system \eqref{eqn-discreteDynamics} for the disturbance set $\mathcal{V}$ if and only if $\texttt{\emph{Suc}}(\mathcal{O},\mathcal{V},p)\subseteq\mathcal{O}$, as all trajectories originating from the set $\mathcal{O}$ are fully contained within itself in a single time step \cite{blanchini_set_1999}. Thus $x(k)\in\mathcal{O}\implies x(k+i)\in\mathcal{O}\:\forall\:i\in\mathbb{Z}_{+}$ and therefore $x(k+i)\in\overline{\mathcal{X}}\:\forall\:i\in\mathbb{Z}_{+}$. Any closed-loop system satisfying this robustness condition is guaranteed to be safe for all time. In the following, we will address the problem of choosing $p$ such that this is guaranteed to be true. 

\section{Set-Based Robust Co-Design}\label{sec-RCCD}

This section  shows how reachability analysis may be leveraged to formulate the robust control co-design optimization problem to consider all possible responses of the system simultaneously. 
This is achieved by determining the reachable sets of a candidate system within the optimization problem. 
The design criteria of the proposed RCCD method is to simultaneously choose the plant parameters $p_p$ and control policy parameters $p_c$ such that for a given operating region $\mathcal{R}(0)$ and set of possible disturbances $\mathcal{V}$, the optimal closed-loop system 
\begin{enumerate}
    \item minimizes the size of a set of costs for the tube of all possible trajectories originating from the operating region $\mathcal{R}(0)$,
    \item minimizes the cost of the design parameters $p_p$ and $p_c$,
    \item robustly remains within a safe subset of the state space $\overline{\mathcal{X}}$ for all possible disturbances within the set $\mathcal{V}$ for all time,
    \item has a feedback control law that results in admissible actions belonging to the set $\overline{\mathcal{U}}$ for all possible states,
    \item the design parameters $p_p$ and $p_c$ are physically realizable.
\end{enumerate}
We first define the proposed RCCD optimization problem in terms of reachable sets, followed by a discussion of how each term satisfies the design criteria. We then propose a computationally tractable form of the problem and discuss its solution. 

\begin{prob}[Set-Based RCCD]
\label{prob-reachCCD-standard}
Given a safe subset of the state space $\overline{\mathcal{X}}\subset\RN^n$, operating region $\mathcal{R}(0)\subseteq\overline{\mathcal{X}}$, and set of admissible control actions $\overline{\mathcal{U}}\subseteq\RN^m$, the set-based RCCD of the closed-loop system \eqref{eqn-discreteDynamics} satisfying the design criteria is given by the solution to the optimization problem
\begin{subequations}\label{eqn-reachCCD-standard}
    \begin{align}
        \min_{p,\mathcal{R}}& \:\:m(N,p,\mathcal{R}(N))+\sum_{k=0}^{N-1}l(k,p,\mathcal{R}(k),\mathcal{V})\label{eqn-reachCCD-standard-cost}\\
        \text{s.t.}&\:\:\mathcal{R}(k+1)=\texttt{Suc}(\mathcal{R}(k),\mathcal{V},p)\:,\label{eqn-reachCCD-standard-reach}\\
        & \:\:\mathcal{R}(k+1)\subseteq\overline{\mathcal{X}}\:,\label{eqn-reachCCD-standard-safe}\\
        & \:\:\mathcal{R}(N)\subseteq\bigcup_{i=0}^{N-1}\mathcal{R}(i)\:,\label{eqn-reachCCD-standard-invariant}\\
        & \:\: \left\{  \pi(x,p_c) ~\middle\vert~ x\in\mathcal{R}(k) \right\}\subseteq \overline{\mathcal{U}}\:,\label{eqn-reachCCD-standard-control}\\
        & \:\:g(p)\leq \mathbf{0}\:,\label{eqn-reachCCD-standard-realizable}\\
        \text{for}&\:\:k=0,\dots,N\:.\label{eqn-reachCCD-standard-time}
    \end{align}
\end{subequations}
\end{prob}

The cost function \eqref{eqn-reachCCD-standard-cost} consists of a terminal cost $m(N,p,\mathcal{R}(N))$ penalizing the design variables $p$ and the final reachable set $\mathcal{R}(N)$, and a running cost $l(k,p,\mathcal{R}(k),\mathcal{V})$ as a function of the design variables, reachable sets, and disturbances at time steps $k=0,\dots,N-1$. The reachable sets are propagated forward in time through the constraint \eqref{eqn-reachCCD-standard-reach} originating from the operating region $\mathcal{R}(0)$ for all possible disturbances $\mathcal{V}$. The system is robustly safe when \eqref{eqn-reachCCD-standard-safe} is satisfied as all possible trajectories remain within the safe region $\overline{\mathcal{X}}$. Moreover, the solution is guaranteed to be safe for all time as the combination of constraints \eqref{eqn-reachCCD-standard-safe} and \eqref{eqn-reachCCD-standard-invariant} requires that the union of the reachable sets over the horizon $k=0,\dots,N-1$ is a robust positive invariant set. The set containment constraints \eqref{eqn-reachCCD-standard-control} ensure that the control policy results in an admissible input. 
The constraints \eqref{eqn-reachCCD-standard-realizable} can be used to ensure that the optimal design parameters $p$ are physically realizable. 

The optimization problem posed in \eqref{eqn-reachCCD-standard} is effective at addressing the shortcomings of the conventional CCD approach because it simultaneously considers all possible trajectories of the dynamic system originating from a user-defined operating region for all sequences of disturbance signals. Thus avoiding any sensitivity to the scenarios posed in \eqref{eqn-CCD-conventional} and \eqref{eqn-CCD-conventional-robust}. By leveraging notions of invariance properties of the reachable sets, the set-based optimization problem \eqref{eqn-reachCCD-standard} is able to provide a certificate of infinite-time safety while only analyzing over a finite-number of discrete time steps. However, solving such an optimization program with set-based objectives and constraints may, in general, be quite difficult. We will now present how \eqref{eqn-reachCCD-standard} can be posed in a way that is computationally tractable and can be computed using existing nonlinear optimization solvers. 

\subsection{Reachability Analysis}\label{sec-rCCD-reach}

This section shows how the reachable sets of a candidate system for design parameters $p$ may be characterized within the optimization problem by the set's support functions, a vector of scalars. But first, we state a few conditions on the sets used in Problem \ref{prob-reachCCD-standard} that will aid us in solving the optimization problem. 
\begin{ass}\label{ass-operatingIsHrep}
    The safe subset $\overline{\mathcal{X}}\subset\RN^n$, operating region $\mathcal{R}(0)\subseteq\overline{\mathcal{X}}$, and set of admissible control actions $\overline{\mathcal{U}}\subseteq\RN^m$ are convex sets given by the H-rep polytopes
    \begin{subequations}
    \begin{align}
        \overline{\mathcal{X}}=&\left\{x\in\RN^n~\middle\vert~H_x x\leq f_x\right\}\:,\\
        \mathcal{R}(0)=&\left\{x\in\RN^n~\middle\vert~H_r x\leq f_r\right\}\:,\\
        \overline{\mathcal{U}}=&\left\{u\in\RN^m~\middle\vert~H_u u\leq f_u\right\}\:,
    \end{align}
    \end{subequations}
    where $H_x\in\RN^{n_{hx}\times n}$, $f_x\in\RN^{n_{hx}}$, $H_r\in\RN^{n_{hr}\times n}$, $f_r\in\RN^{n_{hr}}$, $H_u\in\RN^{n_{hu}\times m}$, and $f_u\in\RN^{n_{hu}}$.
\end{ass}
\begin{ass}\label{ass-reachHasSupport}
    The reachable sets, $\mathcal{R}(k)$, produced by the successor operator in \eqref{eqn-reachCCD-standard-reach} and set of control actions
    \begin{equation}\label{eqn-setOfActions}
        \mathcal{U}(k)=\left\{  \pi(x,p_c) ~\middle\vert~ x\in\mathcal{R}(k) \right\}\:,
    \end{equation}
    have a defined support function. That is, for all $\mathcal{R}(k)\subset\RN^n$ and $\mathcal{U}(k)\subset\RN^m$, and directions $l_x\in\RN^n$ and $l_u\in\RN^m$, we can determine $\rho_{\mathcal{R}(k)}(l_x)$ and $\rho_{\mathcal{U}(k)}(l_u)$ given by \eqref{eqn-supportFnZh}. Furthermore, it is assumed that the closed-loop dynamics satisfy the common axioms required by the specific method used to propagate the successor operator forward in time, e.g., the boundedness and uniqueness of trajectories. 
\end{ass}

Given any candidate system with design parameters $p$, we find the reachable sets of the system originating from the operating region $\mathcal{R}(0)$ for the time horizon $k=0,\dots,N$ by iterating over the successor operator. 
Once the reachable sets $\mathcal{R}(k)\subset\RN^n$ and control actions $\mathcal{U}(k)$ for $k=0,\dots,N$ have been found, their template polyhedrons are computed for
\begin{subequations}\label{eqn-reachTemplates}
\begin{align}
    &\lceil\mathcal{R}(k)\rceil_{\mathcal{L}_h}\:\forall\:k\in\{1,\dots,N\}\:,\label{eqn-reachTemplates-L}\\
    &\lceil\mathcal{R}(k)\rceil_{\overline{\mathcal{X}}}\:\forall\:k\in\{1,\dots,N\}\:,\label{eqn-reachTemplates-X}\\
    &\lceil\mathcal{R}(N)\rceil_{\mathcal{R}(0)}\:,\label{eqn-reachTemplates-0}
\end{align}
\end{subequations}
and the template polyhedrons of the set of control actions as
\begin{subequations}\label{eqn-actionTemplates}
\begin{align}
    &\lceil\mathcal{U}(k)\rceil_{\mathcal{L}_h}\:\forall\:k\in\{1,\dots,N-1\}\:,\label{eqn-actionTemplates-box}\\
    &\lceil\mathcal{U}(k)\rceil_{\overline{\mathcal{U}}}\:\forall\:k\in\{1,\dots,N-1\}\label{eqn-actionTemplates-admis}\:,
\end{align}
\end{subequations}
where $\mathcal{L}_h=\{e_1,\dots,e_n,-e_1,\dots,-e_n\}$ and $e_i$ is the standard unit vector in each of the $i\in\{1,\dots,n\}$ cardinal directions. 
For ease of notation, for each time instance $k$ we define the vector of support functions \eqref{eqn-reachTemplates-L} as $\rho_{lx}^{k}$, \eqref{eqn-reachTemplates-X} as $\rho_{x}^{k}$, \eqref{eqn-reachTemplates-0} as $\rho_{r}$, \eqref{eqn-actionTemplates-box} as $\rho_{lu}^{k}$,
and those of \eqref{eqn-actionTemplates-admis} as $\rho_{u}^{k}$. The vectors of these values over the defined time horizon are simply denoted by dropping the superscript index, i.e., $\rho_{x}=\left(\rho_{x}^0~\cdots~\rho_{x}^N\right)$. 
This process is given by the function $[\tilde{\rho}_{lx},\tilde{\rho}_x,\tilde{\rho}_r,\tilde{\rho}_{lu},\tilde{\rho}_{u}]=\texttt{Reach}(\mathcal{R}(0),\mathcal{V},p,N,H_x,H_r,H_u)$ as described in Algorithm \ref{alg-reach}.

\begin{algorithm}[!ht]
\caption{Reachability analysis of candidate system.
}\label{alg-reach}
\algorithmicrequire{ Operating region $\mathcal{R}(0)$, disturbance set $\mathcal{V}$, design parameters $p$, time horizon $N$, normal directions of safe subset $H_x$, normal directions of operating region $H_r$, normal directions of admissible control inputs $H_u$}\\
\algorithmicensure{ Support functions $\tilde{\rho}_{lx}$, $\tilde{\rho}_x$, $\tilde{\rho}_r$, $\tilde{\rho}_{lu}$, $\tilde{\rho}_u$ defining upper-bounds of the template polyhedron \eqref{eqn-reachTemplates}-\eqref{eqn-actionTemplates}}
\begin{algorithmic}[1]
\State $\tilde{\mathcal{R}}(0)\gets\mathcal{R}(0)$, $\mathcal{L}_h\gets\{e_1,\dots,e_n,-e_1,\dots,-e_n\}$
\For {$k=1,\dots,N$}
\State Compute $\tilde{\mathcal{R}}(k)$ s.t. $\tilde{\mathcal{R}}(k)\supseteq\texttt{Suc}(\tilde{\mathcal{R}}(k-1),\mathcal{V},p)$
\State Compute $\tilde{\mathcal{U}}(k)$ s.t. $\tilde{\mathcal{U}}(k)\supseteq\left\{  \pi(x,p_c) ~\middle\vert~ x\in\tilde{\mathcal{R}}(k) \right\}$
\State Compute $\tilde{\rho}_{lx}^k$, $\tilde{\rho}_{x}^k$, $\tilde{\rho}_{lu}^k$, $\tilde{\rho}_{u}^k$ s.t. $\rho_{lx}^k\leq\tilde{\rho}_{l}^k$, $\rho_{x}^k\leq\tilde{\rho}_{x}^k$, $\rho_{lu}^k\leq\tilde{\rho}_{lu}^k$, $\rho_{u}^k\leq\tilde{\rho}_{u}^k$
\EndFor
\State Compute $\tilde{\rho}_{r}$ s.t. $\rho_{r}\leq\tilde{\rho}_{r}$
\end{algorithmic}
\end{algorithm}

\begin{rem}
    Note that no conditions have been placed on the system dynamics, disturbance set $\mathcal{V}$, or control policy $\pi(x,p_c)$. The only requirements are those stated in assumptions \ref{ass-operatingIsHrep} and \ref{ass-reachHasSupport}, namely that the operating region and constraint supersets are convex polyhedron and that we can find the support functions of the system's reachable sets. 
    However, the computational burden of iterating over the successor operator and sampling the support function is heavily dependent on these choices. E.g., for linear dynamics with reachable sets represented by symmetric convex polytopes, the analysis could be carried out using either H-rep polytopes or zonotopes. However, sampling the support function of an H-rep polytope requires solving a linear program while sampling the support function of a zonotope is done algebraically. 
\end{rem}

\begin{rem}
    Recall the definition of the template polyhedron \eqref{eqn-templatePoly} based on the value of the set's support function \eqref{eqn-supportFnZh} for a direction $l$. The computational burden of Algorithm \ref{alg-reach} may be reduced when the directions used in $\mathcal{L}_h$, $\overline{\mathcal{X}}$, and $\mathcal{R}(0)$ are shared---i.e., when $\overline{\mathcal{X}}$ and $\mathcal{R}(0)$ are hyperrectangles, \eqref{eqn-reachTemplates-L}-\eqref{eqn-reachTemplates-0} are given by the same values of the support function. 
\end{rem}

\begin{rem}
    Over-approximations in both the reachable sets $\mathcal{R}(k)$ and support functions \eqref{eqn-reachTemplates}-\eqref{eqn-actionTemplates}, denoted by tildes, that obey $\mathcal{R}(k)\subseteq\tilde{\mathcal{R}}(k)\subseteq\lceil\tilde{\mathcal{R}}(k)\rceil$ are allowed. Over-approximations may be used to reduce the computational burden of the analysis at the cost of increased conservatism in the resulting design. 
\end{rem}

\subsection{Cost Function}\label{sec-rCCD-costs}

This section presents a method for translating the sets of all trajectories of the candidate system to scalar values that may be used within the cost function. Rather than evaluating the transient performance of a candidate system based on a single trajectory, we penalize poor transient performance based on the size of the reachable set and the distance of its geometric center from a desired value.
While the most accurate way to evaluate the size of a set is by it's volume, determining a set's volume is computationally expensive, even for the most efficient set representations. Instead, we propose using the set's support functions to provide an over-approximation of the set's radius as follows. 

For the template polyhedron of the reachable set $\lceil\mathcal{R}(k)\rceil_{\mathcal{L}_h}$ and control actions $\lceil\mathcal{U}(k)\rceil_{\mathcal{L}_h}$ sampled in the cardinal directions $\mathcal{L}_h=\{e_1,\dots,e_n,-e_1,\dots,-e_n\}$ calculated in Alg. \ref{alg-reach}, the difference of the template polyhedra with the desired values $x_{ref}$ and $u_{ref}$ is given by
\begin{equation}\label{eqn-RandUError}
\begin{split}
    \mathcal{E}_{x}(k)&=\lceil\mathcal{R}(k)\rceil_{\mathcal{L}_h}\oplus\{-x_{ref}\}\:,\\
    \mathcal{E}_{u}(k)&=\lceil\mathcal{U}(k)\rceil_{\mathcal{L}_h}\oplus\{-u_{ref}\}\:.
\end{split}
\end{equation}
Given that the sets \eqref{eqn-RandUError} are defined by the Minkowski sum of a template polyhedron and a singleton, their support functions are given by 
\begin{equation}
\begin{split}
    \rho_{\mathcal{E}_x(k)}(l) &= \rho_{l}^k-l^{T}x_{ref}\:,\\
    \rho_{\mathcal{E}_u(k)}(l) &= \rho_{u}^k-l^{T}u_{ref}\:,
\end{split}
\end{equation}
for all $l\in\mathcal{L}_h$.
The optimal robust co-design satisfying the design criteria will have the sets \eqref{eqn-RandUError} as small as possible and close to the origin. 

Given the support functions $\rho_{\mathcal{E}(k)}(l)$, it is possible to calculate an estimate of the set's geometric center and radius as follows. 
The estimated geometric center of the set is given by
\begin{equation}\label{eqn-centerBound}
\begin{split}
    c(\mathcal{E}(k))=&\frac{\left(\rho_{\mathcal{E}(k)}(e_1)~\cdots~\rho_{\mathcal{E}(k)}(e_n)\right)}{2}\\
    &-\frac{\left(\rho_{\mathcal{E}(k)}(-e_1)~\cdots~\rho_{\mathcal{E}(k)}(-e_n)\right)}{2}\:.
\end{split}
\end{equation}
The distance that the set extends from its estimated geometric center in each direction is then given by
\begin{equation}\label{eqn-distancesForRad}
\begin{split}
    d(\mathcal{E}(k))=&\frac{\left(\rho_{\mathcal{E}(k)}(e_1)~\cdots~\rho_{\mathcal{E}(k)}(e_n)\right)}{2}\\
    &+\frac{\left(\rho_{\mathcal{E}(k)}(-e_1)~\cdots~\rho_{\mathcal{E}(k)}(-e_n)\right)}{2}\:.
\end{split}
\end{equation}
The $2-$norm radius of the set is then upper bounded by
\begin{equation}\label{eqn-radiusBound}
    R_2(\mathcal{E}(k))\leq\|d(\mathcal{E}(k))\|_2\:.
\end{equation}
This follows from the fact that $\|d(\mathcal{E}(k))\|_2$ is the radius of the set's smallest axis aligned hyperrectangle, and the ball defined by $\{x\in\RN^n~\vert~\|x-c(\mathcal{E}(k))\|_2\leq \|d(\mathcal{E}(k))\|_2\}$ is therefore guaranteed to contain the set $\mathcal{E}(k)$ as depicted in Figure \ref{fig-example_radiusAndCenter}.

Combining the upper bound of the set's radius and estimated geometric center, we now define a $2-$norm error metric as
\begin{equation}\label{eqn-2normMetric}
    \overline{R}_2(\mathcal{E}(k))=\|d(\mathcal{E}(k))\|_2+\|c(\mathcal{E}(k))\|_2\:,
\end{equation}
where the estimates of the set's geometric center, $c(\mathcal{E}(k))$, and distances from the center, $d(\mathcal{E}(k))$, are given by \eqref{eqn-centerBound} and \eqref{eqn-distancesForRad}, respectively. 
This error metric of an $n-$dimensional set requires sampling its support function $2n$ times. 
The scalar value given by \eqref{eqn-2normMetric} provides a measure of both how large the set is, as well as how far it lies from the origin as depicted in Figure \ref{fig-example_radiusAndCenter}.
We then formulate the running and terminal costs as
\begin{subequations}\label{eqn-costs-sets}
\begin{align}
    &m(N,p,\mathcal{R}(N)=m_p(p)+\gamma_1\overline{R}_2(\mathcal{E}_{x}(N))\:,\\
    &l(k,p,\mathcal{R}(k),\mathcal{V})=\gamma_2\overline{R}_2(\mathcal{E}_{x}(k))+\gamma_3\overline{R}_2(\mathcal{E}_{u}(k))\:,
\end{align}
\end{subequations}
where $m_p(p):\RN^{p+c}\mapsto\RN$ is any scalar-valued function penalizing the choice of the design parameters and can be chosen based on conventional metrics, e.g., cost or weight of components. The scalars $\gamma_1,\gamma_2,\gamma_3$ are used to  weigh the relative importance in reducing the cost of the sets of trajectories over the considered finite time horizon.
Using \eqref{eqn-costs-sets} in \eqref{eqn-reachCCD-standard} allows the designer to choose the relative importance between penalizing the size and location of the tube of all possible trajectories and making the control energy required to drive the system to these sets small. 

\begin{figure}[!htb]
     \centering
     \includegraphics[width=0.85\linewidth]{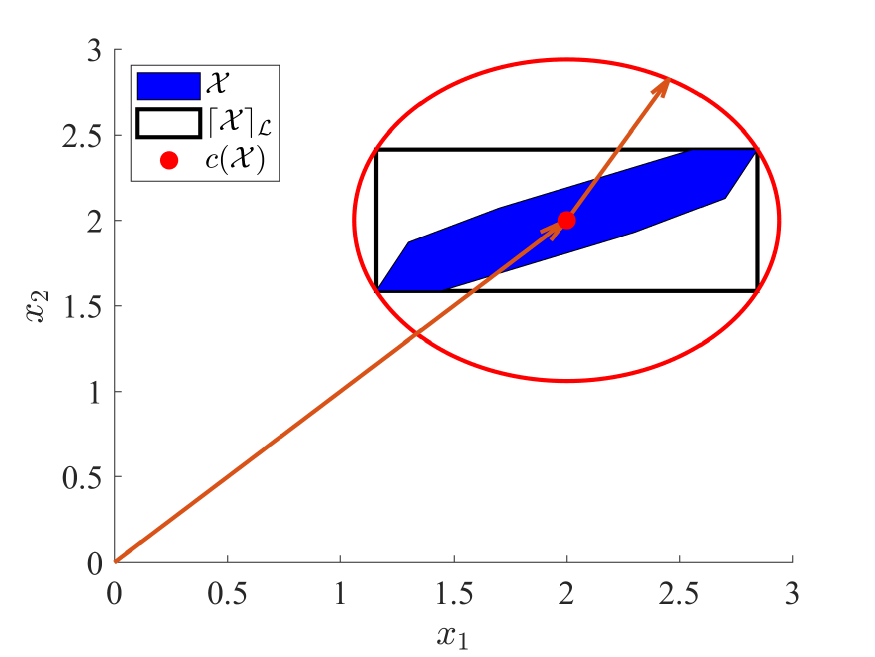}
     \caption{Example of approximating the $2-$norm radius and geometric center of a set using support functions. Vector from origin to geometric center and approximated radius depicted by arrows. 
        }
        \label{fig-example_radiusAndCenter}
\end{figure}

\begin{rem}
    While the desired states and control actions were defined as singletons, more complex costs may be used as long as they have a defined support function $\rho_{\mathcal{E}(k)}(l)$ over the sets of trajectories and control actions. 
\end{rem}

\begin{rem}
    When the dimensions of the reachable sets and set of control actions are weighed differently, these weights can be incorporated into the calculation of the proposed cost function \eqref{eqn-costs-sets} using linear transformations. Let $Q\in\RN^{q\times n}$ be a matrix of weights mapping the $n-$dimensional reachable set to the $q$ penalized values. The set \eqref{eqn-RandUError} for the cost of the states is then given by $\mathcal{E}_{x}(k)=\lceil Q\mathcal{R}(k)\rceil_{\mathcal{L}_h}\oplus\{-x_{nom}\}$.
\end{rem}

\subsection{Set-Containment Constraints}\label{sec-rCCD-setContain}

This section shows how sufficient conditions for the set-containment constraints given by \eqref{eqn-reachCCD-standard-safe} and \eqref{eqn-reachCCD-standard-invariant} may be imposed as inequality constraints using the reachable set's support functions. 
Methods for detecting the containment of two sets are computationally expensive and conditions for enforcing containment rarely exist \cite{sadraddini_linear_2019}. Once we consider the nonconvex union of multiple reachable sets as in \eqref{eqn-reachCCD-standard-invariant}, enforcing invariance becomes a difficult task. 
Instead, we introduce some conservatism by constraining the optimal response to have a tube of trajectories terminating within the operating region. Doing so guarantees  that the reachability analysis is complete and that the union of the reachable sets originating from the operating region is robustly positive invariant. 
\begin{prop}\label{prop-returnToR0}
    For any candidate system with reachable sets given by $\mathcal{R}(k+1)=\texttt{Suc}(\mathcal{R}(k),\mathcal{V},p)$ for $k=0,\dots,N-1$ and over-approximations $\mathcal{R}(k)\subseteq\tilde{\mathcal{R}}(k)$ with support functions $\rho^k_x\leq\tilde{\rho}^k_x$ defining the template polyhedron \eqref{eqn-reachTemplates}, the set containment constraints \eqref{eqn-reachCCD-standard-safe}-\eqref{eqn-reachCCD-standard-control} are satisfied if
    \begin{subequations}\label{eqn-prop-returnToR0}
        \begin{align}
            &\tilde{\rho}^k_x\leq f_x\:\forall\:k\in\{1,\dots,N\}\:,\label{eqn-prop-returnToR0-safe}\\
            &\tilde{\rho}^k_{u}\leq f_u\:\forall\:k\in\{0,\dots,N-1\}\:,\label{eqn-prop-returnToR0-control}\\
            &\tilde{\rho}^N_r\leq f_r\:.\label{eqn-prop-returnToR0-inv}
        \end{align}
    \end{subequations}
\end{prop}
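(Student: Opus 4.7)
The plan is to observe that Proposition \ref{prop-returnToR0} reduces to three essentially independent applications of the H-rep containment characterization recalled in Sec.~\ref{sec-prelims}, namely $\mathcal{Z}\subseteq\mathcal{X}\iff \rho_{\mathcal{Z}}(h_i)\leq f_i$ for every row $h_i$ of the matrix $H$ defining $\mathcal{X}$, combined with the monotonicity chain $\rho\leq\tilde{\rho}\leq f$ that passes from the over-approximation back to the true support values. Each of the conditions \eqref{eqn-prop-returnToR0-safe}--\eqref{eqn-prop-returnToR0-inv} corresponds to exactly one of the set-containment constraints in \eqref{eqn-reachCCD-standard-safe}--\eqref{eqn-reachCCD-standard-control}, so I would structure the proof as three short arguments of the same shape.

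First, I would handle safety \eqref{eqn-reachCCD-standard-safe} by fixing $k\in\{1,\dots,N\}$ and noting that $\tilde{\rho}^k_x$ collects the support values of $\tilde{\mathcal{R}}(k)$ in exactly the normal directions (the rows of $H_x$) that define $\overline{\mathcal{X}}$. Applying the H-rep containment criterion to $\mathcal{R}(k)\subseteq\tilde{\mathcal{R}}(k)$ and $\overline{\mathcal{X}}$, the hypothesis $\tilde{\rho}^k_x\leq f_x$ together with $\rho^k_x\leq\tilde{\rho}^k_x$ yields $\rho^k_x\leq f_x$ componentwise, which is precisely the condition for $\mathcal{R}(k)\subseteq\overline{\mathcal{X}}$. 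The control-admissibility constraint \eqref{eqn-reachCCD-standard-control} is argued identically with $\mathcal{U}(k)$ from \eqref{eqn-setOfActions} in place of $\mathcal{R}(k)$, $\overline{\mathcal{U}}$ in place of $\overline{\mathcal{X}}$, and $\tilde{\rho}^k_u\leq f_u$ supplying the bound; no new ideas are required.

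The only step where the proposition deliberately introduces conservatism is the invariance constraint \eqref{eqn-reachCCD-standard-invariant}. Here my plan is to prove the strictly stronger containment $\mathcal{R}(N)\subseteq\mathcal{R}(0)$ and then note that the target superset in \eqref{eqn-reachCCD-standard-invariant} contains $\mathcal{R}(0)$ as its $i=0$ term, so $\mathcal{R}(N)\subseteq\mathcal{R}(0)\subseteq\bigcup_{i=0}^{N-1}\mathcal{R}(i)$ follows trivially. The strengthened containment is then obtained by the same H-rep argument applied to $\mathcal{R}(0)=\{x\mid H_r x\leq f_r\}$: the hypothesis $\tilde{\rho}^N_r\leq f_r$, together with $\rho^N_r\leq\tilde{\rho}^N_r$, delivers $\mathcal{R}(N)\subseteq\mathcal{R}(0)$.

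I do not expect a substantive obstacle; the proof is essentially bookkeeping once the preliminary containment characterization is in hand. The one point I would flag explicitly in the write-up is the passage from $\tilde{\rho}$ (on which the hypotheses are stated, and which is what Algorithm \ref{alg-reach} actually returns) to $\rho$ (which is what the H-rep containment lemma consumes), since this transitivity is the only nontrivial observation and is also where the method trades exactness for computational tractability.
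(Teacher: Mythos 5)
Your proposal is correct and follows essentially the same route as the paper's proof: both reduce each containment constraint to the H-rep support-function criterion via the chain $\rho\leq\tilde{\rho}\leq f$, and both handle invariance by establishing the stronger containment $\mathcal{R}(N)\subseteq\mathcal{R}(0)$ and then using $\mathcal{R}(0)\subseteq\bigcup_{i=0}^{N-1}\mathcal{R}(i)$. The transitivity point you flag at the end is exactly the observation the paper's proof turns on, so nothing is missing.
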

\begin{proof}
    The satisfaction of the inequality constraints \eqref{eqn-prop-returnToR0} implying the satisfaction of \eqref{eqn-reachCCD-standard-safe}-\eqref{eqn-reachCCD-standard-control} follows from the requirement that the safe subset $\overline{\mathcal{X}}$, operating region $\mathcal{R}(0)$, and set of admissible control inputs $\overline{\mathcal{U}}$ are all H-rep polytopes and that the support functions are sampled in their normal directions. Thus it follows that $\rho^k_x\leq\tilde{\rho}^k_x\leq f_x$ implies $\mathcal{R}(k)\subseteq\tilde{\mathcal{R}}(k)\subseteq\overline{\mathcal{X}}$ and $\rho^k_u\leq\tilde{\rho}^k_u\leq f_u$ implies $\mathcal{U}(k)\subseteq\tilde{\mathcal{U}}(k)\subseteq\overline{\mathcal{U}}$. Similarly, $\rho^N_r\leq\tilde{\rho}^k_N\leq f_r$ implies $\mathcal{R}(N)\subseteq\tilde{\mathcal{R}}(N)\subseteq\mathcal{R}(0)$. Furthermore, $\mathcal{R}(0)\subseteq\bigcup_{i=0}^{N-1}\mathcal{R}(i)$ and therefore $\mathcal{R}(N)\subseteq\bigcup_{i=0}^{N-1}\mathcal{R}(i)$.
\end{proof}
\begin{rem}
    When the reachability analysis is exact, \eqref{eqn-prop-returnToR0-safe} is both necessary and sufficient for $\mathcal{R}(k)\subseteq\overline{\mathcal{X}}$. However, \eqref{eqn-prop-returnToR0-inv} is sufficient but not necessary to verify that the union of the reachable sets of the closed-loop system are positively invariant. This introduces some conservatism as $\mathcal{R}(0)$ and the finite time horizon $N$ are both constant parameters chosen by the designer. Alternatively, $\mathcal{R}(0)$ and $N$ may be considered as decision variables within the optimization problem; however, this further complicates the solution as it introduces a competing objective between maximizing the size of $\mathcal{R}(0)$ while minimizing the costs of the size of the reachable sets $\mathcal{R}(j)$ for $j>0$. An alternative approach is to find a convex robustly positive invariant set for each candidate system and use it in lieu of the operating region in \eqref{eqn-prop-returnToR0-inv}. However, finding these convex invariant sets may be computationally expensive \cite{blanchini_set_1999} and limits the types of systems that may be considered. 
\end{rem}

\subsection{Solving the Set-Based RCCD Problem}

We now present a form of the set-based RCCD problem that may be computed using existing nonlinear optimization solvers. This is given by
\begin{subequations}\label{eqn-reachCCD-solve}
    \begin{align}
    \min_{p,\tilde{\rho}}&=\begin{array}{l}
         m_p(p)+\gamma_1\overline{R}_2(\mathcal{E}_x(N))\\
        \:\:\:\:\:\:+\sum_{k=0}^{N-1}\gamma_2\overline{R}_{2}(\mathcal{E}_x(k))+\gamma_3\overline{R}_2(\mathcal{E}_u(k))
    \end{array}\label{eqn-reachCCD-solve-cost}\\
        \text{s.t.}&\:\:[\tilde{\rho}_l,\tilde{\rho}_x,\tilde{\rho}_r,\tilde{\rho}_u]=\texttt{Reach}(\mathcal{R}(0),\mathcal{V},p,N,H_x,H_r,H_u)\:,\label{eqn-reachCCD-solve-reach}\\
        & \:\:\tilde{\rho}_x\leq (f_x,\dots,f_x)\:,\label{eqn-reachCCD-solve-safe}\\
        & \:\:\tilde{\rho}_u\leq (f_u,\dots,f_u)\:,\label{eqn-reachCCD-solve-control}\\
        & \:\:\tilde{\rho}_r\leq f_r\:,\label{eqn-reachCCD-solve-invariant}\\
        & \:\:g(p)\leq \mathbf{0}\:.\label{eqn-reachCCD-solve-realizable}
    \end{align}
\end{subequations}%
The optimization problem \eqref{eqn-reachCCD-solve} consists of a nonlinear cost function depending on the choice $m_p(p)$ and 2-norm costs on the sets as defined in section \ref{sec-rCCD-costs}. 
The inequality constraints \eqref{eqn-reachCCD-solve-safe}-\eqref{eqn-reachCCD-solve-invariant} are linear and guarantee the safe operation of the system for all time as described in section \ref{sec-rCCD-setContain}. 
The reachability analysis \eqref{eqn-reachCCD-solve-reach} returns the support functions for the reachable set and set of control actions using Algorithm \ref{alg-reach} as discussed in section \ref{sec-rCCD-reach}. 
In this form, the feasible space of the design parameters is a subset of those given by \eqref{eqn-reachCCD-standard} due to the allowance of over-approximations in the reachability analysis and computation of template polyhedra, as well as the imposed sufficient condition for invariance as described in section \ref{sec-rCCD-setContain}. 
The additional computational burden required to solve this optimization problem compared to its nonrobust counterpart is dependent on the difficulty of the computation of the reachable sets, sampling of their support functions, and the inability to provide analytic gradients of the equality constraints \eqref{eqn-reachCCD-solve-reach}. 
\section{Numerical Example}\label{sec-numEx}

This section applies the proposed set-based RCCD method to the well-studied active suspension system \cite{allison_co-design_2014,sundarrajan_towards_2021}. The dynamic system consists of four states, $x=\begin{bmatrix}
    (z_{us}-z_0) &
    \dot{z}_{us} &
    (z_s-z_{us}) &
    \dot{z}_s
\end{bmatrix}^T$, where $z_{us}$ is the vertical position of the unsprung mass, $z_{s}$ is the vertical position of the sprung mass, and $z_0$ is the vertical position of the road, as depicted in Figure \ref{fig-activeSUSDiag}. 
The dynamics of the system are given by the linear state space model
\begin{subequations}\label{eqn-activeSUS-dynam}
\begin{align}
    \dot{x}=&A(p_p)x(t)+Bu(t)+E\dot{z}_0(t)\:,\\
    A(p_p)=&\begin{bmatrix}
        0 & 1 & 0 & 0 \\
        \frac{-k_t}{m_{us}} & \frac{-c_s}{m_{us}} & \frac{k_s}{m_{us}} & \frac{c_s}{m_{us}} \\
        0 & -1 & 0 & 1 \\
        0 & \frac{c_s}{m_{s}} & \frac{-k_s}{m_{s}} & \frac{-c_s}{m_{s}}
    \end{bmatrix}\:,\\
    B=&\begin{bmatrix}
        0 \\
        \frac{-1}{m_{us}} \\
        0 \\
        \frac{1}{m_{s}}
    \end{bmatrix}\:,\:E=\begin{bmatrix}
        -1 \\
        0 \\
        0 \\
        0 \\
    \end{bmatrix}\:,
\end{align}
\end{subequations}
where $k_t$ is the spring stiffness of the tire, $k_s$ is the spring stiffness of the suspension system, $c_s$ is the damping coefficient of the suspension system, $m_{us}$ is one fourth of the unsprung mass, and $m_s$ is one fourth of the sprung mass \cite{allison_co-design_2014,sundarrajan_towards_2021}. A zero-order hold is used to discretize the continuous dynamics for a sampling interval of $\Delta t = 0.01$ seconds. 

\begin{figure}[!htb]
     \centering
     \includegraphics[width=0.4\linewidth]{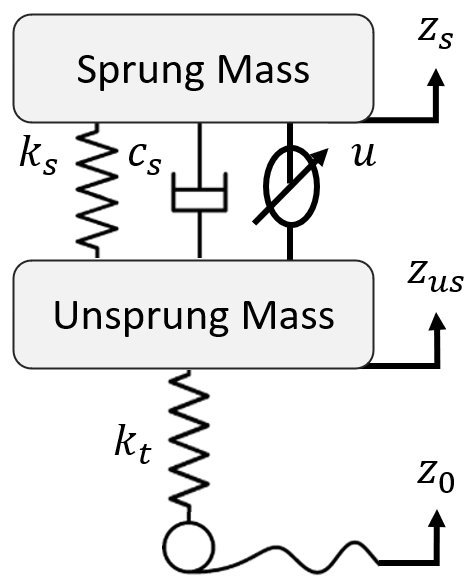}
     \caption{Diagram of the active suspension system \cite{allison_co-design_2014}. 
        }
        \label{fig-activeSUSDiag}
\end{figure}

The goal of the system design is to choose the stiffness and damping coefficients of the suspension system, $p_p=\begin{bmatrix}
    k_s & c_s
\end{bmatrix}^{T}\in\RN^2$, as well as a linear feedback gain defining the controller of the active system, $u(k)=\pi(x,p_c)=p_c^{T}x(k)$ for $p_c\in\RN^4$. The stiffness of the tire, sprung mass, and unsprung mass are all constants given by $k_t=232.5\times10^{3}$ N/m, $m_{s}=325$ kg, and $m_{us}=65$ kg \cite{allison_co-design_2014}. The decision variables $p_p$ and $p_c$ are constrained to belong to the intervals $10^4\leq k_s\leq 10^5$ N/m, $10^3\leq c_s \leq 10^4$ Ns/m, and $-10^6\leq p_{c,i}\leq 10^6\:\forall\:i\in\{1,\dots,4\}$. 

The system is designed to be robust to all possible sequences of changes in the road profile given by $\dot{z}_0\in\mathcal{V}=[-0.2,0.2]$ m/s. We design the system to be robust around the operating region defined by the intervals
\begin{equation}
    \mathcal{R}(0)=\begin{bmatrix}
        -0.25,0.25\\
        -0.75,0.75\\
        -0.25,0.25\\
        -0.75,0.75
    \end{bmatrix}\begin{matrix}
        \text{m} \\ \text{m/s} \\ \text{m} \\ \text{m/s}
    \end{matrix}\:,
\end{equation}
with safety constraints enforced on the distance between the sprung mass and unsprung mass, $x_3=z_s-z_{us}\in\overline{\mathcal{X}}$, given by
\begin{equation}
    -0.5\text{m}\leq z_s-z_{us}\leq 0.5 \text{m}\:,
\end{equation}
to avoid damaging the suspension system. The actuator force that can be applied by the control system is constrained to belong to the interval set $p_c^{T}x(k)\in\overline{\mathcal{U}}=[-4\times 10^3,4\times 10^3]$ N.

Following the costs chosen in \cite{allison_co-design_2014} to penalize poor transient performance, the sets \eqref{eqn-RandUError} used in the cost function are defined as
\begin{equation}
\begin{split}
    \mathcal{E}_{x}(k)&=\lceil Q\mathcal{R}(k)\rceil_{\mathcal{L}_h}\oplus\{0\}\:,\\
    \mathcal{E}_{u}(k)&=\lceil\mathcal{U}(k)\rceil_{\mathcal{L}_h}\oplus\{0\}\:,
\end{split}
\end{equation}
where 
\begin{equation}
    Q = \begin{bmatrix}
        10^5 & 0 & 0 & \\
        0 & 0.5\frac{c_s}{m_{s}} & 0.5\frac{-k_s}{m_{s}} & 0.5\frac{-c_s}{m_{s}}
    \end{bmatrix}\:,
\end{equation}
and the goal is to keep the control actions small and the trajectories of the system close to the origin. 
The linear transformation of the reachable sets $Q\mathcal{R}(k)$ in the calculation of the costs defines the different weights, $10^5$ and $0.5$, as well as a linear mapping from the states to the acceleration of the sprung mass $\ddot{z}_{s}$ as defined by \eqref{eqn-activeSUS-dynam} for the variables $p$. The cost on the control effort is introduced through the weighting in \eqref{eqn-reachCCD-solve-cost} given by $\gamma_1=\gamma_2=1$ and $\gamma_3=10^{-5}$. The design parameters are weighed as $m_p(p)=10^{-2}k_s+10^{-1}c_s$.

The reachability analysis within the optimization problem is evaluated over a time horizon of $t=[0,0.2]$ seconds requiring $N=20$ discrete time steps. For each candidate system with constant design parameters, the reachability analysis of the linear dynamics \eqref{eqn-activeSUS-dynam} is performed using the zonotope set representation with support functions evaluated using the methods described in \cite{girard_efficient_2008}. The optimization problem \eqref{eqn-reachCCD-solve} is solved using MATLAB's sequential quadratic program function $\texttt{fmincon}$. Using one core of a laptop with a 2.1 GHz Intel i7 processor and 16 GB of RAM, the optimization problem was solved in 4.93 seconds. The resulting reachable set of the RCCD system is shown in Figure \ref{fig-example_activeSUSreach-RCCD}. The reachable set of the robust design is compared to the plant parameters designed using a simultaneous CCD approach with an open loop control signal \cite{allison_co-design_2014}. To close the loop, an infinite time LQR feedback gain evaluated for the weights used in computing the open-loop control is applied. The reachable set of the simultaneous design with the LQR feedback policy is shown in Figure \ref{fig-example_activeSUSreach-LQR}. The performance of the two design strategies for the nominal initial condition and sequence of disturbances used in the simultaneous design are shown in Figure \ref{fig-example_activeSUSsingle}. 

\begin{figure}[!htb]
     \centering
     \begin{subfigure}[b]{0.4\textwidth}
         \centering
         \includegraphics[width=\textwidth]{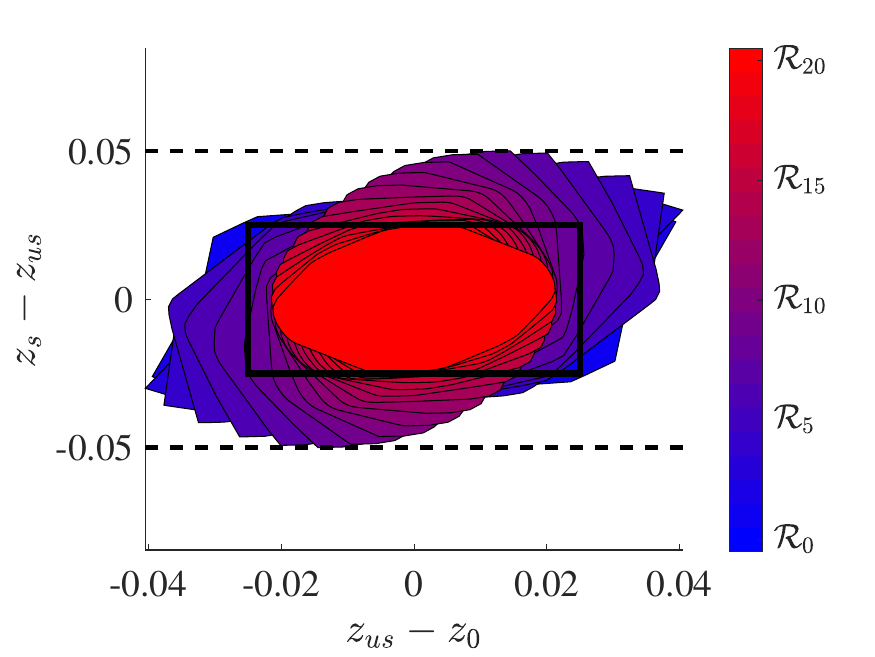}
         \caption{Set-based robust control co-design.}
         \label{fig-example_activeSUSreach-RCCD}
     \end{subfigure}
     \hfill
     \begin{subfigure}[b]{0.4\textwidth}
         \centering
         \includegraphics[width=\textwidth]{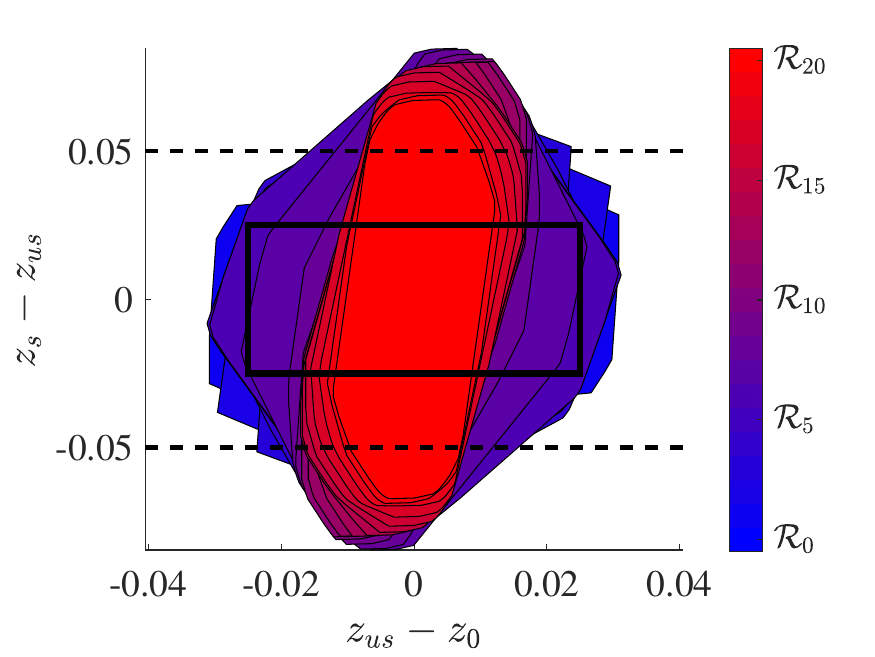}
         \caption{Simultaneous control co-design.}
         \label{fig-example_activeSUSreach-LQR}
     \end{subfigure}
     \hfill
        \caption{Projection of the two design strategies' reachable sets onto the first and third state dimensions. Safety constraints $\overline{\mathcal{X}}$ depicted by dashed lines and projection of the operating region depicted by bold black lines.}
        \label{fig-example_activeSUSreach}
\end{figure}



\begin{figure}[!htb]
     \centering
     \begin{subfigure}[b]{0.38\textwidth}
         \centering
         \includegraphics[width=\textwidth]{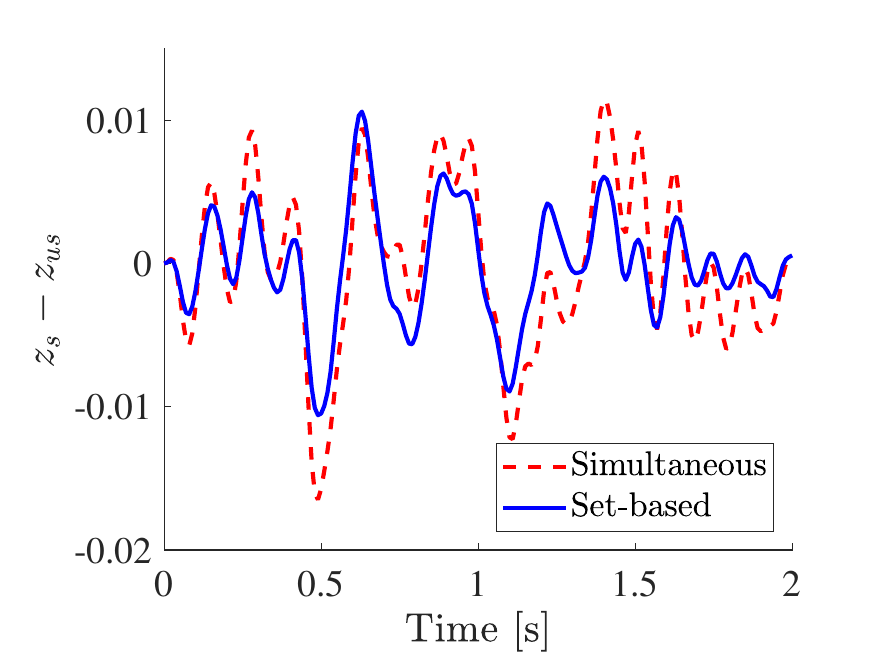}
         \caption{Trajectories of simultaneous and set-based designs.}
         \label{fig-example_activeSUSsingle-traj}
     \end{subfigure}
     \hfill
     \begin{subfigure}[b]{0.38\textwidth}
         \centering
         \includegraphics[width=\textwidth]{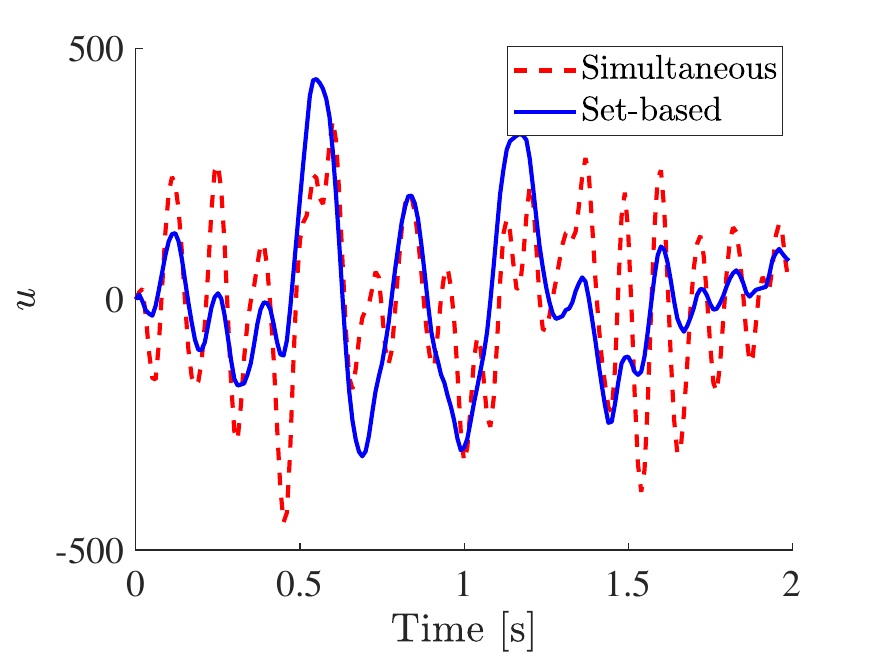}
         \caption{Control actions of simultaneous and set-based designs.}
         \label{fig-example_activeSUSsingle-U}
     \end{subfigure}
     \hfill
        \caption{Comparison of the transient performance of the two design strategies for the nominal scenario used in the simultaneous design.}
        \label{fig-example_activeSUSsingle}
\end{figure}

It can be seen from Figure \ref{fig-example_activeSUSreach-RCCD} that the reachable set of the system designed with the proposed RCCD method robustly satisfies the safety constraints. Furthermore, at the end of the considered time horizon the system has returned to the operating region, and therefore the reachability analysis is both complete, by design, and the closed-loop system is robustly positive invariant around the operating region. In the reachability analysis of the simultaneous design with an LQR controller, the system is designed for a single scenario and therefore not guaranteed to be safe, as can be seen in Figure \ref{fig-example_activeSUSreach-LQR} as the reachable set exceeds the safety constraint.
Comparing Figures \ref{fig-example_activeSUSreach-RCCD} and \ref{fig-example_activeSUSreach-LQR}, the set-based approach for RCCD shapes the reachable sets to extend further in the $z_{us}-z_0$ dimension where there are no constraints. This is achieved by increasing the stiffness and damping coefficients, as well as using a more aggressive feedback gain as shown in Table \ref{tab-designParams}. Examining the performance under the nominal scenario that the sequential design was optimized over, it can be seen in Figure \ref{fig-example_activeSUSsingle-traj} that the set-based approach provides a suitable design and control strategy while requiring a similar control effort as shown in Figure \ref{fig-example_activeSUSsingle-U}. However, once the system is subjected to a sequence of disturbances outside of the nominal scenario, only the set-based design is guaranteed to maintain safe operation. 

\begin{table}[!htb]
\caption{Plant and control parameters resulting from the two CCD methods. 
}
\begin{center}


  \begin{tabular}{c c c c c }
  \hline
  Method & $k_s$ & $c_s$ & $\max(|\rho_{lu}|)$& \\
  \hline
  

  Set-Based & $72064$ & $3888$ & $4000$ & \\
  Simultaneous & $23600$ & $1030$ & $2318$ & \\
  \hline
   & $p_{c,1}$& $p_{c,2}$& $p_{c,3}$& $p_{c,4}$\\
  \hline
  Set-Based & $-7922.6$ & $0$&$-50481$&$-3386.5$\\
  Simultaneous & $3121.2$ & $918.32$&$-5928.3$&$-1870.1$\\
  
  \hline
  \end{tabular}
\end{center}
\label{tab-designParams}
\end{table}
\section{Conclusions}\label{sec-conclusions}

This paper presents a new set-based method for robust control co-design. The proposed method leverages reachability analysis to optimize the closed-loop system over all possible trajectories originating from a user-defined operating region. 
It was shown how the support functions can be used to translate the sets of trajectories to scalar values in the cost function to design systems with tubes that contract quickly as well as enforce safety and invariant constraints. 
The proposed method was applied to the well-studied active suspension system and showed that the resulting design was safe for all time. 
Future work will focus on applying these methods to more complex systems and control laws.


\addtolength{\textheight}{-12cm}   









\bibliographystyle{IEEEtran}
\bibliography{bibTex_23.bib}

\end{document}